\documentclass[runningheads]{llncs}

\usepackage[utf8]{inputenc}
\usepackage{amsmath}
\usepackage{amsfonts}
\usepackage{amssymb}
\usepackage{lmodern}
\usepackage{mathtools}
\usepackage{subcaption}

\usepackage{graphicx}
\usepackage{braket}
\usepackage{comment}
\usepackage{xfrac}

\usepackage{booktabs}
\usepackage{adjustbox}

\usepackage{qcircuit}

\usepackage{hyperref, xcolor}

\usepackage{pgf}
\usepackage{float}

\usepackage{todonotes}
\usepackage{caption} 
\captionsetup[table]{skip=10pt}


\newcommand{\cent}[0]{\mbox{\textcent}}
\newcommand{\dollar}{\$}
\newcommand{\mymatrix}[2]{\left( \begin{array}{#1} #2 \end{array} \right)}
\newcommand{\myvector}[1]{\mymatrix{c}{#1}}
\newcommand{\mypar}[1]{\left( #1 \right)}

\newcommand{\abs}[1]{\left\lvert#1\right\rvert}

\newcommand{\setBuilder}[2] { 
    \ensuremath{ \{{#1} \mid {#2}\}}
}

\newcommand{\MOD}[1]{\ensuremath{\mathtt{MOD}_{\rm #1}}}
\newcommand{\MODp}[0]{\MOD{p}}

\newcommand{\bigO}{\ensuremath{\mathcal{O}}}

\newcommand{\scl}[1]{\scalebox{0.9}{#1}}

\newcommand{\thh}[0]{\textsuperscript{th}}

\usepackage{url}

\begin{document}

\title{Cost-efficient QFA Algorithm for Quantum Computers\thanks{Part of this research was done while Yakary{\i}lmaz was visiting the Institute of Theoretical and Applied Informatics, Gliwice, Poland in 2021 and 2022.}}

%
%

\author{\"Ozlem Salehi\inst{1,3} \and Abuzer Yakary{\i}lmaz\inst{2,3}
}


%
\authorrunning{\"O. Salehi and A. Yakary{\i}lmaz}
%
\institute{Institute of Theoretical and Applied Informatics, Polish Academy of Sciences, Gliwice, Poland \and
Center for Quantum Computer Science, University of Latvia, R\={\i}ga, Latvia \and
QWorld Association, Tallinn, Estonia, \url{https://qworld.net} \\
\email{osalehi@iitis.pl,abuzer.yakaryilmaz@lu.lv}
}

\maketitle  

\begin{abstract}
The study of quantum finite automata (QFAs) is one of the possible approaches in exploring quantum computers with finite memory. Despite being one of the most restricted models, Moore-Crutchfield quantum finite automaton (MCQFA) is proven to be exponentially more succinct than classical finite automata models in recognizing certain languages such as \(\MODp = \setBuilder{a^{j}}{j \equiv 0 \mod p}\), where $p$ is a prime number. In this paper, we present a modified MCQFA algorithm for the language $\MODp$, the operators of which are selected based on the basis gates on the available real quantum computers. As a consequence, we obtain shorter quantum programs using fewer basis gates compared to the implementation of the original algorithm given in the literature.

\keywords{quantum finite automata \and state-efficiency \and quantum circuit \and unary languages \and bounded-error \and Qiskit \and IBMQ backends}
\end{abstract}

\section{Introduction}

Quantum finite automaton (QFA) is a theoretical model for studying quantum computers with finite memory (finite number of states). The most natural extension from a classical automaton to QFA is obtained by replacing the transition matrices with unitary operators. The obtained model is called Moore-Crutchfield quantum finite automaton (MCQFA) \cite{MC00}, and it is one of the earliest models \cite{AY21}. Even though some regular languages can not be recognized by MCQFAs, MCQFAs are proven to be exponentially more succinct (memory-efficient or state-efficient) for certain languages\footnote{MCQFAs can also be super-exponentially more succinct but on promise problems \cite{AY12}.} such as the language \(\MODp = \setBuilder{a^{j}}{j \equiv 0 \mod p}\), where $p$ is a prime number \cite{AF98}. 

The proposed MCQFA recognizing $\MODp$ language consists of sub-automata whose actions can be visualized as a sequence of rotations on the 2D plane, correspondingly around the $y$-axis on the Bloch Sphere. Several implementation ideas have been proposed in \cite{Kalis18} to realize the algorithms in gate-based quantum computers. Some new ideas and simulation results on IBMQ backends \cite{IBM} have recently appeared in \cite{BSONY21}. (See also \cite{TFL19,MPC20,PHYF22} for photonics or optical implementations of QFA algorithms.)

When a quantum circuit is run on an IBMQ computer, the circuit is restructured to fit the topology of the computer, arbitrary gates are decomposed into basis gates, and the circuit is optimized to improve the performance. This process is known as transpilation. As the depth and the number of 2-qubit gates increase, the performance deteriorates due to noise. Hence, it becomes crucial to design quantum circuits with smaller depths and fewer gates.

With this motivation, we propose a modified MCQFA algorithm for the $\MODp$ language. The proposed algorithm performs rotations around the $x$-axis on the Bloch Sphere and its correctness is proven through a change of basis. We show that the optimized implementation of the original algorithm from \cite{Kalis18} can be adapted for our modified algorithm. When transpiled to be run on the real quantum backends, our algorithm is more efficient in the number of gates and has a smaller circuit depth. We also provide some experimental results from $\mathit{ibmq\_belem}$ backend.

The rest of the paper is structured as follows: In Section 2, we give the necessary background information. In Section 3, we present both the original and modified versions of 2-state and $\log$-state MCQFAs. In Section 4, we discuss the implementation of the algorithms and we conclude with Section 5.

\section{Preliminaries}

In this section, we provide information on quantum finite automata.

We denote the finite alphabet by $\Sigma$ not containing the left end-marker $\cent$ and the right end-marker $\$$, and \(\tilde{\Sigma}\) denotes \(\Sigma \cup \{\cent, \$ \}\). The length of any string \(w\in \Sigma^{*}\) is denoted by \(\abs{w}\), and $ w[i] $ denotes its $ i $\thh symbol. The string $w$ is processed by an automaton as $\cent w \dollar$ from left to right and symbol by symbol.

Among the many different quantum finite automaton (QFA) models proposed in the literature \cite{AY21}, we will focus on \emph{Moore-Crutchfield quantum finite automaton} (MCQFA) \cite{MC00}, which is known as the most restricted model. 

Formally, a MCQFA $ M $ with \(d\) states is a 5-tuple \[
    M = (Q, \Sigma,  \{U_\sigma \mid \sigma \in \tilde{\Sigma}\}, q_s, Q_A),
\] where \(Q=\{q_1,\ldots,q_d\}\) is the \emph{set of states}, \(U_\sigma\) is the \emph{unitary operator} for symbol \(\sigma \in \tilde{\Sigma}\), \(q_s \in Q\) is the \emph{start state}, and \(Q_A \subseteq Q\) is the \emph{set of accepting state(s)}.

We can trace the computation of $ M $ on a given input $ w \in \Sigma^* $ with length $l$ by a $d$-dimensional column vector (quantum state or state vector), where the $j$\textsuperscript{th} entry represents the amplitude of $ q_j $. At the beginning of computation, $ M $ starts in quantum state $ \ket{v_0} = \ket{q_s} $, which has zeros except its $ s $\textsuperscript{th} entry that is 1. For each symbol $ \sigma \in \tilde{\Sigma} $, the unitary operator $ U_\sigma $ is applied to the state vector. Thus, $ M $ is in the following final state after reading the whole input:
\begin{equation}\label{eq:state}
    \ket{v_f} = U_{\dollar} U_{w[l]} U_{w[l-1]} \cdots U_{w[1]} U_{\cent} \ket{v_0}.  
\end{equation}
Then, the final state is measured in the computational basis, and the input is accepted if and only if an accepting state is observed. The accepting probability is calculated as in Eq.~\eqref{eq:prob}
\begin{equation}\label{eq:prob}
    \sum_{q_j \in Q_A} | \braket{q_j|v_f} |^2,
\end{equation}
where $  \braket{q_j|v_f} $ gives the amplitude of $ q_j $ in $ \ket{v_f} $.

In Table \ref{tabl:notation}, we present a list of notations used throughout the paper.

\begin{table}[]
\centering
\resizebox{\textwidth}{!}{%
\begin{tabular}{@{}ll@{}}
\toprule
\textbf{Notation}                         & \textbf{Explanation}                                                                      \\ \midrule
$Q$, $\tilde{Q}$                          & Set of states for 2-state and $\bigO(\log p)$-state MCQFAs                                \\ \midrule
$q_s$, $\tilde{q}_s$                      & Start state for 2-state and $\bigO(\log p)$-state MCQFAs                                  \\ \midrule
$Q_A$, $\tilde{Q}_A$                      & Set of accept states for 2-state and $\bigO(\log p)$-state MCQFAs                         \\ \midrule
$\Sigma$, $\tilde{\Sigma}$            & Set of alphabet, set of alphabet including end-markers                                    \\ \midrule
$U_{\sigma}$, ${U}_a'$                    & Unitary operators for symbol $\sigma$ for 2-state MCQFAs (original and new)               \\ \midrule
$\tilde{U}_\sigma$, $\tilde{U}_\sigma'$ & Unitary operators for symbol $\sigma$ for $\bigO(\log p)$-state MCQFAs (original and new) \\ \midrule
$\hat{U}_a$, $\hat{U}_a'$                 & Unitary operators for symbol $\sigma$ used in the optimized implementation (original and new) \\ \midrule
$w$, $w[i]$                               & String, $i$'th symbol of string $w$                                                         \\ \midrule
$M_p$, ${M}_p'$,                          & 2 state MCQFAs with rotation angle $2 \pi / p$ (old, new)                                 \\ \midrule
$M_p^k$                                   & 2 state MCQFA with rotation angle $2 \pi k / p$                                           \\ \midrule
$K$                                       & A subset of integers from $ \{0,\dots, p-1\}$                                                     \\ \midrule
$M_p^K$, $(\tilde{M}_p^{K})'$             & $\bigO(\log p)$-state MCQFA with rotation angles $2 \pi k_j /p$, $k_j \in K$ (old, new)       \\ \midrule
$SX$                                      & Matrix representation corresponding to the squareroot of NOT operator                     \\ \midrule
$R_j$                                     &  Rotation on the real-valued plane spanned by $\ket{0}$ and $\ket{1}$ with angle $2 \pi k_j /p$                                \\ \midrule
$R_j'$                                    & Rotation matrix with angle $4 \pi k_j /p$ on the $x$-$y$ plane of the Bloch Sphere                                                                                          \\ \midrule
$V$, $V'$                                 & Vector spaces                                                                             \\ \midrule
$CX, I, S_x, X$                           & Controlled NOT, identity, squareroot of NOT, NOT gates                                    \\ \midrule
$R_y, R_z$                                & Rotation gates around $y$ and $z$ axis                                                    \\ \midrule
$F(\sigma,\rho)$                          & Fidelity of the states represented by density matrices $\sigma$ and $\rho$                                                \\ \bottomrule
\end{tabular}%
}
\caption{List of notations used throughout the paper. }
\label{tabl:notation}
\end{table}

\section{MCQFAs for the $\MODp$ Problem}

Any bounded-error probabilistic finite automaton recognizing $\MODp$ requires at least $p$ states while it was proven in \cite{AF98} that there exists an MCQFA with \(\bigO(\log p)\) states recognizing the language $\MODp$ with bounded error, i.e., any input is accepted with probability 1 if it is in $ \MODp $, and it is accepted with probability at most $ \varepsilon $ for some $ \varepsilon < \frac{1}{2} $, otherwise\footnote{The number of states depending on the error bound $ \varepsilon $ is bounded by $ \frac{4}{\varepsilon} \log 2 p  $ \cite{AN09}.}.

\subsection{2-state MCQFA for $\MODp$}

Let us start by describing the well-known 2-state MCQFA  construction for the language $\MODp$. Let $M_p = (\{q_1,q_2\},\Sigma = \{a\}, \{U_\sigma \mid \sigma \in \tilde{\Sigma}\}, q_1, \{q_1\}),$ where the unitary operators $U_{\cent}$ and $U_{\$}$ are defined as the $2 \times 2$ identity matrix, and operator $U_a$ is defined as 
$
    \begin{pmatrix}
        \cos{(2\pi/p)} & ~ & -\sin{(2\pi/p)} \\
        \sin{(2\pi/p)} & & \cos{(2\pi/p)} \\
    \end{pmatrix},
$
and corresponds to a counter-clockwise rotation with angle \({2\pi}/{p}\) on the 2D plane spanned by $\ket{q_1}$ and $\ket{q_2}$ as visualized in Figure \ref{fig:2d}.

\begin{figure}
    \centering

\tikzset{every picture/.style={line width=0.75pt}} 

\begin{tikzpicture}[x=0.75pt,y=0.75pt,yscale=-0.7,xscale=0.7]

\draw    (200.51,30.33) -- (201.49,226.33) ;
\draw [shift={(201.5,229.33)}, rotate = 269.72] [fill={rgb, 255:red, 0; green, 0; blue, 0 }  ][line width=0.08]  [draw opacity=0] (8.93,-4.29) -- (0,0) -- (8.93,4.29) -- cycle    ;
\draw [shift={(200.5,27.33)}, rotate = 89.72] [fill={rgb, 255:red, 0; green, 0; blue, 0 }  ][line width=0.08]  [draw opacity=0] (8.93,-4.29) -- (0,0) -- (8.93,4.29) -- cycle    ;
\draw    (104,128.33) -- (298,128.33) ;
\draw [shift={(301,128.33)}, rotate = 180] [fill={rgb, 255:red, 0; green, 0; blue, 0 }  ][line width=0.08]  [draw opacity=0] (8.93,-4.29) -- (0,0) -- (8.93,4.29) -- cycle    ;
\draw [shift={(101,128.33)}, rotate = 0] [fill={rgb, 255:red, 0; green, 0; blue, 0 }  ][line width=0.08]  [draw opacity=0] (8.93,-4.29) -- (0,0) -- (8.93,4.29) -- cycle    ;
\draw   (118.33,128.33) .. controls (118.33,82.68) and (155.34,45.67) .. (201,45.67) .. controls (246.66,45.67) and (283.67,82.68) .. (283.67,128.33) .. controls (283.67,173.99) and (246.66,211) .. (201,211) .. controls (155.34,211) and (118.33,173.99) .. (118.33,128.33) -- cycle ;
\draw [line width=1.5]    (201,128.33) -- (259.52,76) ;
\draw [shift={(262.5,73.33)}, rotate = 500.19] [fill={rgb, 255:red, 0; green, 0; blue, 0 }  ][line width=0.08]  [draw opacity=0] (11.61,-5.58) -- (0,0) -- (11.61,5.58) -- cycle    ;
\draw [line width=0.75]    (229.34,104.14) .. controls (248.34,100.14) and (250.34,120.14) .. (241.34,129.14) ;
\draw    (242.62,105.92) -- (243.45,113.87) ;
\draw    (242.62,105.92) -- (250.18,109.62) ;

\draw (308,119.4) node [anchor=north west][inner sep=0.75pt]    {$|q_{1} \rangle $};
\draw (185,5.4) node [anchor=north west][inner sep=0.75pt]    {$|q_{2} \rangle $};
\draw (57,119.4) node [anchor=north west][inner sep=0.75pt]    {$-|q_{1} \rangle $};
\draw (179,235.4) node [anchor=north west][inner sep=0.75pt]    {$-|q_{2} \rangle $};
\draw (236,90.05) node [anchor=north west][inner sep=0.75pt]    {$2\pi/p$};
\end{tikzpicture}
    \caption{Visualization of the operator $U_a$ on the 2D plane.}
    \label{fig:2d}
\end{figure}
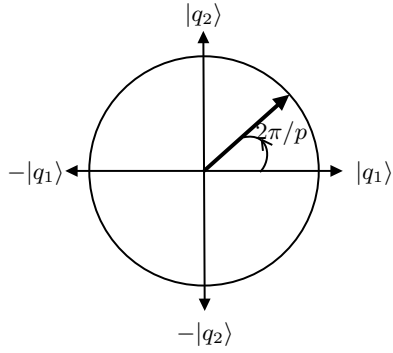

After reading the string $a^j$, the state of $M_p$ is $\cos (2\pi j/p)  \ket{q_1} + \sin (2\pi j/p) \ket{q_2}$. It is easy to see that for member strings, that is when $j$ is a multiple of $p$, the probability of acceptance is equal to 1. For non-member strings,  the acceptance probability is given by $ \cos^2 \mypar{{2\pi}  \abs{w}/{p}} $, and it takes its maximum value when $ |w| \equiv \frac{p-1}{2} \mod p $ or $ |w| \equiv \frac{p+1}{2} \mod p $. Thus, the error for non-member strings can be as big as $ 1 - \cos^2(\pi |w| /p)  $.

Next, we will describe an alternative MCQFA for the language $\MODp$, based on the following observation. Let us define ${U}_a'$ and $SX$ as in Eq.~\eqref{eq:matrices}.
\begin{equation}\label{eq:matrices}
    {U}_a' =
    \begin{pmatrix}
        e^{-2 \pi i /p} & ~ & 0 \\
        0 & &e^{2 \pi i /p} 
    \end{pmatrix}
   \text{ and } 
SX = \frac{1}{2} \begin{pmatrix}
        1+i & ~ & 1-i \\
        1-i &~ & 1+i \\
    \end{pmatrix}
\end{equation}

\begin{lemma} \label{lem: sxua}
$SX^{\dagger} U_a' SX = U_a$.
   \end{lemma}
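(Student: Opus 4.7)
The plan is to verify the identity by direct matrix computation, with Euler's formula $e^{\pm 2\pi i /p} = \cos(2\pi/p) \pm i \sin(2\pi/p)$ as the bridge between the diagonal entries of $U_a'$ and the trigonometric entries of $U_a$. Since everything is $2 \times 2$, no conceptual leap is needed; the whole task is bookkeeping.

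First I would write down $SX^{\dagger}$ explicitly. Because $SX$ has the form $\tfrac{1}{2}\left(\begin{smallmatrix} 1+i & 1-i \\ 1-i & 1+i \end{smallmatrix}\right)$, the conjugate transpose is simply $SX^{\dagger} = \tfrac{1}{2}\left(\begin{smallmatrix} 1-i & 1+i \\ 1+i & 1-i \end{smallmatrix}\right)$. As a quick sanity check I would confirm that $SX^{\dagger} SX = I$ (this follows from $(1\pm i)(1\mp i) = 2$ and $(1\pm i)^2 = \pm 2i$, so diagonal entries give $\tfrac{1}{4}(2+2) = 1$ and off-diagonal entries give $\tfrac{1}{4}(2i - 2i) = 0$). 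This reassures that $SX$ is genuinely unitary before doing the main computation.

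Next I would compute the product in two stages. Writing $\alpha = e^{-2\pi i/p}$ and $\beta = e^{2\pi i/p}$, the middle product is
\[
U_a' SX = \tfrac{1}{2}\begin{pmatrix} \alpha(1+i) & \alpha(1-i) \\ \beta(1-i) & \beta(1+i) \end{pmatrix}.
\]
Left-multiplying by $SX^{\dagger}$ and collecting terms, the $(1,1)$ entry becomes $\tfrac{1}{4}\bigl[(1-i)(1+i)\alpha + (1+i)(1-i)\beta\bigr] = \tfrac{1}{2}(\alpha + \beta)$, which equals $\cos(2\pi/p)$ by Euler. The $(1,2)$ entry becomes $\tfrac{1}{4}\bigl[(1-i)^2 \alpha + (1+i)^2 \beta\bigr] = \tfrac{1}{4}\bigl[-2i\alpha + 2i\beta\bigr] = \tfrac{i(\beta - \alpha)}{2} = -\sin(2\pi/p)$. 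The other two entries are handled symmetrically and yield $\sin(2\pi/p)$ and $\cos(2\pi/p)$.

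The main obstacle is simply keeping track of signs when expanding $(1 \pm i)^2$ and matching them to $\pm i (\beta - \alpha)/2 = \mp \sin(2\pi/p)$; a conceptual remark could shorten the exposition by noting that $SX$ is, up to a global phase, the Clifford gate that maps $R_z(\theta)$ to $R_y(\theta)$ under conjugation, and $U_a'$, $U_a$ are (again up to global phase) precisely $R_z(4\pi/p)$ and $R_y(4\pi/p)$ — but for a self-contained lemma the direct $2 \times 2$ verification above is the cleanest route.
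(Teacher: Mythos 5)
Your proof is correct and takes essentially the same route as the paper's: a direct entry-by-entry computation of the $2\times 2$ product $SX^{\dagger}U_a'SX$, reduced to trigonometric form via Euler's formula. The only (cosmetic) difference is that you keep $SX$ in Cartesian form with $1\pm i$ entries and use $(1\pm i)^2=\pm 2i$, whereas the paper first rewrites $SX$ as $\tfrac{1}{\sqrt{2}}e^{-i\pi/4}\bigl(\begin{smallmatrix}e^{i\pi/2} & 1\\ 1 & e^{i\pi/2}\end{smallmatrix}\bigr)$ and tracks complex exponentials throughout; both bookkeeping schemes land on the same four entries.
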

\begin{proof}
To start with note that 
\begin{equation}
SX = \frac{1}{2} \begin{pmatrix}
        1+i & ~ & 1-i \\
        1-i &~ & 1+i \\
    \end{pmatrix}
    =
    \frac{1}{\sqrt{2}}
    \begin{pmatrix}
        e^{i \pi / 4} & ~ & e^{-i\pi / 4} \\
        e^{-i\pi /4} &~ & e^{i \pi /4} \\
    \end{pmatrix}
    =
    \frac{1}{\sqrt{2}}
    e^{-i \pi /4}
    \begin{pmatrix}
        e^{i \pi /2} & ~ & 1 \\
        1 &~ & e^{i \pi /2} \\
    \end{pmatrix}.
\end{equation}
\noindent 
We calculate $ SX^{\dagger} U_a' SX$
\begin{align}
    & =
    \frac{1}{\sqrt{2}} e^{i \pi / 4} \mymatrix{cc}{ e^{-i \pi /2} & 1 \\ 
    1 &  e^{-i \pi /2}
    }
    \mymatrix{cc}{ e^{-2\pi i / p } & 0 \\ 0 & e^{ 2\pi i /p} }
    \frac{1}{\sqrt{2}} e^{-i \pi / 4}
    \begin{pmatrix}
        e^{i\pi / 2} & ~ & 1 \\
        1 &~ & e^{i \pi / 2} \\
    \end{pmatrix}  \nonumber
    \\    & 
    = \frac{1}{2}e^0 
    \mymatrix{cc}{ 
    e^{-i \pi /2} & 1 \\
    1 &  e^{-i \pi /2}
    }
    \mymatrix{cc}{
    e^{ i \mypar{ {\pi / 2}- {2\pi / p} } } & 
    e^{ -i 2\pi / p } \\
    e^{ i 2\pi / p } &
    e^{ i \mypar{ {\pi / 2}+{2 \pi /p} } }
    } \nonumber 
    \\ & 
    = \frac{1}{2}
    \mymatrix{ccc}{
        e^{ -i 2\pi / p } + e^{ i 2\pi /p }  & ~~~~ &
        e^{ -i \mypar{ {\pi /2} + {2\pi / p}  } } +
        e^{ i \mypar{ {\pi / 2} + {2\pi /p}  } } \\
        e^{ i \mypar{ {\pi /2} - {2\pi /p}  } } +
        e^{ -i \mypar{ {\pi /2} - {2\pi /p}  } } & ~~~~ &
        e^{ -i 2\pi / p } + e^{ i 2\pi / p }
    }.
\end{align}

We calculate each term of this matrix one by one by using the following trigonometric qualities: (i) $ \cos(-\theta) = \cos(\theta) $, (ii) $ \sin(-\theta) = -\sin(\theta) $, (iii) $ e^{i\theta} + e^{-i\theta} = 2\cos(\theta) $, and (iv) $ \cos (\theta+\pi/2) = -sin(\theta) $.

The top-left and the bottom-right terms are equal to
\begin{equation}
    e^{ -i 2\pi / p } + e^{ i 2\pi / p } = 2 \cos \mypar{ 2\pi / p }.
\end{equation}
The top-right term is equal to
\begin{align}
     e^{ -i \mypar{ {\pi / 2} + {2\pi/p}  } } +
    e^{ i \mypar{ {\pi / 2} + {2\pi / p}  } } 
    &= 
    2\cos \mypar{ {\pi / 2} + {2\pi / p} }  \nonumber \\ 
    &= -2 \sin \mypar{ 2\pi / p } .
\end{align}
The bottom-left term is equal to 
\begin{align}
    e^{ i \mypar{ {\pi/2} - {2\pi/p}  } } +
        e^{ -i \mypar{ {\pi / 2} - {2\pi / p}  } } &=  2\cos \mypar{ {\pi/2} - {2\pi/p}} \nonumber \\ 
        &= -2 \sin \mypar{ - 2\pi / p } \nonumber \\ 
        &= 2 \sin \mypar{ 2\pi / p }.
\end{align}
Thus, we obtain that
$
    SX^{\dagger} U_a' SX  = 
    \mymatrix{ccc}{
     cos(2\pi / p) & ~~ & -sin(2\pi / p) \\
     sin(2\pi / p) & ~~ & cos(2\pi / p)
     }
     = U_a.
$
\qed
\end{proof}

The above lemma suggests that the operator $U_a$ for symbol $a$ can be replaced with $SX^{\dagger} U_a' SX$. Now note that $(SX^{\dagger} U_a' SX)^j = SX^{\dagger} U_a'^j SX$ for any $j\geq 0$. This suggests a new automaton that applies $SX$ at the beginning of the computation, applies the unitary operator $U_a'$ for each scanned $a$, and applies $SX^{\dagger}$ at the end of the computation. Based on this observation, we describe the following MCQFA.

Let ${M}_p' = (Q, \Sigma, \{{U}_{\sigma}' \mid \sigma \in \tilde{\Sigma}\}, q_s, Q_A) $ where $\Sigma$, $Q$, $q_s$ and $Q_A$ are defined as before. The operator ${U}_{\cent}'$ is defined as in Eq.~\eqref{eq:ucent}.

\begin{equation} \label{eq:ucent}
 {U}_{\cent}' = SX= \frac{1}{2} 
\begin{pmatrix}
        1+i & ~ & 1-i \\
        1-i &~ & 1+i \\
    \end{pmatrix}
\end{equation}
The operator ${U}_a'$ is defined as in Eq.~\eqref{eq:ua}.

\begin{equation} \label{eq:ua}
    {U}_a' =
    \begin{pmatrix}
        e^{-2 \pi i /p} & ~ & 0 \\
        0 & &e^{2 \pi i /p} 
    \end{pmatrix}
\end{equation}
$\tilde{U}_{\$} = \tilde{U}_{\cent}^{\dagger}$ and it is represented with the following matrix:

\begin{equation}
{U}_{\$} =  SX^{\dagger} = \frac{1}{2} 
\begin{pmatrix}
        1-i & ~ & 1+i \\
        1+i &~ & 1-i \\
    \end{pmatrix}.
\end{equation}

The relationship between $M_p'$ and $M_p$ can be also explained with a change of basis between $V'=\{\ket{0'}, \ket{1'} \}$ and $V=\{\ket{0},\ket{1}\}$ where
\begin{equation}
    \ket{0'} = 
    \frac{1}{2}\myvector{1+i\\1-i}~~\mbox{and}~~
    \ket{1'} = \frac{1}{2} \myvector{1-i\\1+i},
\end{equation}
and with the change of basis matrix $SX^{\dagger}$ that maps $\ket{0'}$ to $\ket{0}$ and $\ket{1'}$ to $\ket{1}$. In this manner, the linear transformation $U_a'$ working on the vector space spanned by $V'$ is mapped to the vector space spanned by $V$ through the mapping  $SX^{\dagger}U_a'SX$.

\subsection{{\(\bigO(log p)\)}-State MCQFA for $\MODp$}

In the above constructions, the rotation angle is selected as $2\pi/p$. Note that one can use the rotation angle \(2 \pi k/p\) for some \(k \in \{1,\ldots,p-1\}\), and depending on the value of $k$, the acceptance probability for each non-member string varies. Nevertheless, the maximum acceptance probability of a non-member string is still not bounded. Combining multiple automata with different rotation angles, it is proven that for any $p$, there exists a $\bigO( \log p )$-state MCQFA recognizing the $\MODp$ language with bounded error in \cite{AF98}. Let us start by recalling the construction.

2-state MCQFA $ M_p^k $ is defined similar to $M_p$, except that it performs a rotation of angle $2\pi k/{p}$ on the 2D plane. Let $K = \{ k_1,\ldots,k_d \} $ where each $ k_j \in \{1,\ldots,p-1\}$. We define $2d$-state MCQFA $ M_p^K  = (\tilde{Q},\Sigma, \{\tilde{U}_\sigma \mid \sigma \in \tilde{\Sigma}\}, \tilde{q}_s, \tilde{Q}_A)  $. The state set $\tilde{Q}$ is defined as $\{ q_1^1,q_2^1,q_1^2,q_2^2,\ldots,q_1^d,q_2^d \}$, where $q_1^j, q_2^j$ belongs to state set of $M_p^{k_j}$,  $\tilde{q}_s= q_1^1$ and $\tilde{Q}_A = \{ q_1^1\}$.

$ M_p^K $ simulates the individual $M_p^{k_j}$'s in parallel, through the following unitary transformations. $ \tilde{U}_{\cent} $ is defined as
$
H^{\otimes \log(2d) -1} \otimes I
$
and it creates the equal superposition state given in Eq.~\eqref{eq:sp}.
\begin{align}\label{eq:sp}
    \ket{q_1^1} \; \xrightarrow{\makebox[8mm]{\(\tilde{U}_{\cent}\)}} \; \frac{1}{\sqrt{d}} \ket{q_1^1} + \frac{1}{\sqrt{d}} \ket{q_1^2}+\cdots + \frac{1}{\sqrt{d}} \ket{q_1^d}
\end{align}

Upon reading each $a$, the operator $\tilde{U}_a$ is applied, which is defined as in Eq.~\eqref{eq:tildeua}.

\begin{align}\label{eq:tildeua}
\tilde{U}_a = \bigoplus_{j=1}^d R_j  = \mymatrix{c c c c}{
        R_1 & 0 & \cdots & 0 \\
        0 & R_2 & \cdots & 0 \\
        \vdots & \vdots &  \ddots & \vdots \\
        0 & 0 & \cdots & R_d
    },
\end{align}

where
\begin{align}
    R_{j} =
    \mymatrix{c@{\hspace{3mm}}c} {
    \cos{\left({2\pi k_j}/{p}\right)} & -\sin{\left({2\pi k_j}/{p}\right)} \\
    \sin{\left({2\pi k_j}/{p}\right)} & \cos{\left({2\pi k_j}/{p}\right)} 
    }.
\end{align}
The operator $\tilde{U}_a$ helps executing each \(M_p^{k_j}\), where \(M_p^{k_j}\)s are counter-clockwise rotations by angle \(2 \pi k_j/p\). The unitary operator $\tilde{U}_{\$}$ is defined as $\tilde{U}_{\cent}^{-1}$. 

In \cite{AN09}, it is proven that after reading $a^l$, the acceptance probability is equal to 1 if $l$ is divisible by $p$ and otherwise equals
$\frac{1}{d^2} \sum_{j=1}^d \cos^2 (2\pi k_j l/p)$; and, when $j$ is not a multiple of $p$, the acceptance probability is less than $\varepsilon$ for $d = 2 \frac{\log p}{\varepsilon}$, resulting in a \(\bigO(\log p)\)-state MCQFA recognizing the $\MODp$ language with bounded error.

Next we will describe an alternative \(\bigO(\log p)\)-state MCQFA using the observation made in Lemma \ref{lem: sxua}. To start with, note that by Lemma \ref{lem: sxua}, it follows that $SX^{\dagger}R_j' SX = R_j \text{ for any } j=1,\dots, d$, where
\begin{align}
 R_{j}' &=
    \begin{pmatrix}
        e^{-2 \pi i k_j /p} & ~ & 0 \\
        0 & &e^{2 \pi i k_j /p} 
    \end{pmatrix} =
    e^{-2 \pi i k_j /p}
    \begin{pmatrix}
        1 & ~ & 0 \\
        0 & &e^{4 \pi i k_j /p} 
    \end{pmatrix}.
\end{align}

\noindent Let's define $\tilde{U}_a'$ as follows:
\begin{equation}
\tilde{U}_a' = \bigoplus_{j=1}^d R_j'  = \mymatrix{c c c c}{
        R_1' & 0 & \cdots & 0 \\
        0 & R_2' & \cdots & 0 \\
        \vdots & \vdots &  \ddots & \vdots \\
        0 & 0 & \cdots & R'_d
    }.
\end{equation}
Since $ SX^{\dagger}SX=I$, the following holds:

\begin{align}
&\phantom{=}\mymatrix{c c c c}{
        SX^{\dagger}R_1'SX & 0 & \cdots & 0 \\
        0 & SX^{\dagger}R_2'SX & \cdots & 0 \\
        \vdots & \vdots &  \ddots & \vdots \\
        0 & 0 & \cdots & SX^{\dagger}R_d'SX
    }^j  \nonumber \\
    &= 
\mymatrix{c c c c}{
        SX^{\dagger} & 0 & \cdots & 0 \\
        0 & SX^{\dagger}& \cdots & 0 \\
        \vdots & \vdots &  \ddots & \vdots \\
        0 & 0 & \cdots & SX^{\dagger}
    }
\mymatrix{c c c c}{
        R_1' & 0 & \cdots & 0 \\
        0 & R_2' & \cdots & 0 \\
        \vdots & \vdots &  \ddots & \vdots \\
        0 & 0 & \cdots & R_d'
    }^j
\mymatrix{c c c c}{
        SX & 0 & \cdots & 0 \\
        0 & SX & \cdots & 0 \\
        \vdots & \vdots &  \ddots & \vdots \\
        0 & 0 & \cdots & SX
    }  \nonumber \\
    &= (I^{\otimes \log (2d)-1} \otimes SX^{\dagger})  (\tilde{U}_a')^j (I^{\otimes \log (2d)-1} \otimes SX)
\end{align}

Using the above equality, we present automaton $ (\tilde{M}_p^{K})'  = (\tilde{Q},\Sigma,  \{\tilde{U}_{\sigma}' \mid \sigma \in \tilde{\Sigma}\}, \tilde{q}_s, \tilde{Q}_A)$ where $\Sigma$, $\tilde{Q}$, $\tilde{q}_s$ and $\tilde{Q}_A = \{q_1^1 \}$ are defined as above. The unitary operator $\tilde{U}_{\cent}'$ is defined as in Eq.~\eqref{eq:tildeucent},
\begin{align}
    \label{eq:tildeucent}
    \tilde{U}_{\cent}' = H^{\otimes \log (2d) -1 } \otimes U_{\cent}',
\end{align}
the operator $\tilde{U}_a'$ is defined as above, and $\tilde{U}_{\$}'$ is defined as in Eq.~\eqref{eq:tildeudollar},
\begin{align}
    \label{eq:tildeudollar}
    \tilde{U}_{\$}' = \tilde{U}_{\cent}'^{\dagger} = H^{\otimes \log (2d) -1 } \otimes {U}_{\$}'.
\end{align}

\section{Implementations}

In this section, we will present the quantum circuit implementations for the automata given in the previous section.

We used Qiskit quantum programming framework provided by IBM Quantum \cite{IBM}, which allows designing quantum circuits that can be simulated in local simulators as well as on real quantum computers. Before running a circuit in a real machine, the circuit is transpiled into basis gates defined as the set $\{CX, I, R_z, S_x, X \}$. The matrices corresponding to the gates in the basis set are given in Figure \ref{fig: basis}. The transpilation process also depends on the chosen backend and the selected optimization level. To run our experiments, we used $\mathit{ibmq\_belem}$ backend and optimization level 2.  

\begin{figure}
\centering 
\includegraphics[width = 0.8\textwidth]{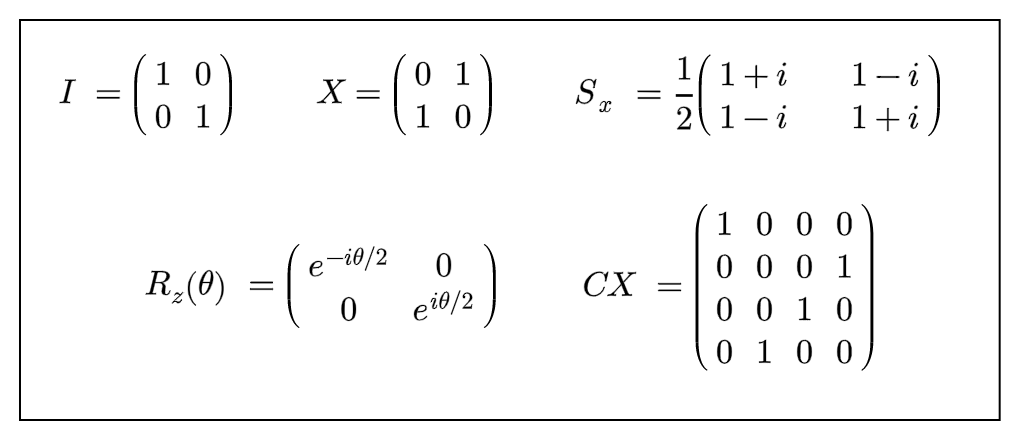}
\caption{Basis gates and their matrix representations in Qiskit.}
\label{fig: basis}
\end{figure}

\subsection{Single Qubit Implementation} 

Let us recall the implementation of single-qubit MCQFA ${M}_{11}$ from \cite{BSONY21}. The circuit diagram for the string $aa$ is given in Figure 
\ref{circ:4-1_single}. The operator $U_a$ is simply implemented by a $R_y$ gate with the angle of rotation $4\pi/11$, since in the Bloch sphere the angle between the states $\ket{0}$ and $\ket{1}$ is 180 degrees ($\pi$ radians). The matrix corresponding to $R_y$ gate in Qiskit is defined as in Eq.\eqref{eq:rygate}.
\begin{equation}\label{eq:rygate}
    R_y(2\theta) = \mymatrix{ccc}{ \cos \theta & ~ &  -\sin \theta \\ \sin \theta & & \cos \theta}.
\end{equation}

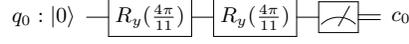
\begin{figure}[h]

\vspace{-0.1in}
    \centering
    \begin{equation*}
        \scl{\(
        \Qcircuit @C=1.0em @R=1em{
            \lstick{q_0:\ket0}  & \gate{R_y(\frac{4 \pi }{11})} & \gate{R_y(\frac{4 \pi }{11})} &   \meter & \rstick{c_0} \cw 
        }\)}
    \end{equation*}
    \caption{Single qubit \MOD{11} implementation for the string $aa$. The gate $R_y(\frac{4\pi}{11})$ is applied upon reading each $a$. }
    \label{circ:4-1_single}
\end{figure}
In Figure \ref{fig: rybloch}, the computation is visualized on the Bloch sphere for a string of length 11 and 4, respectively. The computation starts in state $\ket{0}$ which is shown by the green vector and expressed as
\begin{equation}
    \ket{v_0} = \myvector{1 \\ 0 }.
\end{equation}
In the figure on the left-hand side, the states obtained after applying $j$ $R_y$ gates are numbered as $j$ and shown using blue dots. After 11 rotations, the final quantum state is $\ket{0}$, resulting in the acceptance of the string with probability 1. The state vectors are presented in Eq.~\eqref{eq:states}.
\scriptsize
\begin{equation} \label{eq:states}
\begin{array}{l}
     \ket{v_0} = \myvector{1 \\ \\ 0} 
    \underbrace{\xrightarrow{R_y \mypar{ \frac{4\pi}{11} }}}_{1\textsuperscript{st}~a}
    \myvector{ \cos \frac{2\pi}{11} \\ \\ \sin \frac{2\pi}{11}}
    \underbrace{\xrightarrow{R_y \mypar{ \frac{4\pi}{11} }}}_{2\textsuperscript{nd}~a}
    \myvector{ \cos \frac{4\pi}{11} \\ \\ \sin \frac{4\pi}{11}}
    \underbrace{\xrightarrow{R_y \mypar{ \frac{4\pi}{11} }}}_{3\textsuperscript{rd}~a}
    \myvector{ \cos \frac{6\pi}{11} \\ \\ \sin \frac{6\pi}{11}}
    \underbrace{\xrightarrow{R_y \mypar{ \frac{4\pi}{11} }}}_{4\textsuperscript{th}~a}
\\ \\
    \myvector{ \cos \frac{8\pi}{11} \\ \\ \sin \frac{8\pi}{11}}
    \underbrace{\xrightarrow{R_y \mypar{ \frac{4\pi}{11} }}}_{5\textsuperscript{th}~a}
    \myvector{ \cos \frac{10\pi}{11} \\ \\ \sin \frac{10\pi}{11}}
    \underbrace{\xrightarrow{R_y \mypar{ \frac{6\pi}{11} }}}_{5\textsuperscript{th}~a}
    \myvector{ \cos \frac{12\pi}{11} \\ \\ \sin \frac{12\pi}{11}}
    \underbrace{\xrightarrow{R_y \mypar{ \frac{4\pi}{11} }}}_{7\textsuperscript{th}~a}
    \myvector{ \cos \frac{14\pi}{11} \\ \\ \sin \frac{14\pi}{11}}
    \underbrace{\xrightarrow{R_y \mypar{ \frac{4\pi}{11} }}}_{8\textsuperscript{th}~a}
\\ \\
    \myvector{ \cos \frac{16\pi}{11} \\ \\ \sin \frac{16\pi}{11}}
    \underbrace{\xrightarrow{R_y \mypar{ \frac{4\pi}{11} }}}_{9\textsuperscript{th}~a}
    \myvector{ \cos \frac{18\pi}{11} \\ \\ \sin \frac{18\pi}{11}}
    \underbrace{\xrightarrow{R_y \mypar{ \frac{4\pi}{11} }}}_{10\textsuperscript{th}~a}
    \myvector{ \cos \frac{20\pi}{11} \\ \\ \sin \frac{20\pi}{11}}
    \underbrace{\xrightarrow{R_y \mypar{ \frac{4\pi}{11} }}}_{11\textsuperscript{th}~a}
    \myvector{ \cos \frac{22\pi}{11} \\ \\ \sin \frac{22\pi}{11}} = 
    \myvector{1 \\ \\ 0}
\end{array}    
\end{equation}
\normalsize
On the right hand side, $R_y$ gate is applied for 4 times and the final vector is shown in red. All states lie on the $x$-$z$ plane. Remark that each $ R_y \mypar{ \frac{4\pi}{11} } $ rotates the state on the $x$-$z$ plane of the Bloch sphere with angle $ \frac{4\pi}{11} $, which corresponds the rotation with angle $ \frac{2\pi}{11} $ on $\ket{0}$-$\ket{1}$ plane, as shown in Figure~\ref{fig:2d}. 
\begin{figure}
    \centering
    \centering
\begin{subfigure}{.3\textwidth}
  \centering
  \includegraphics[width=1\linewidth]{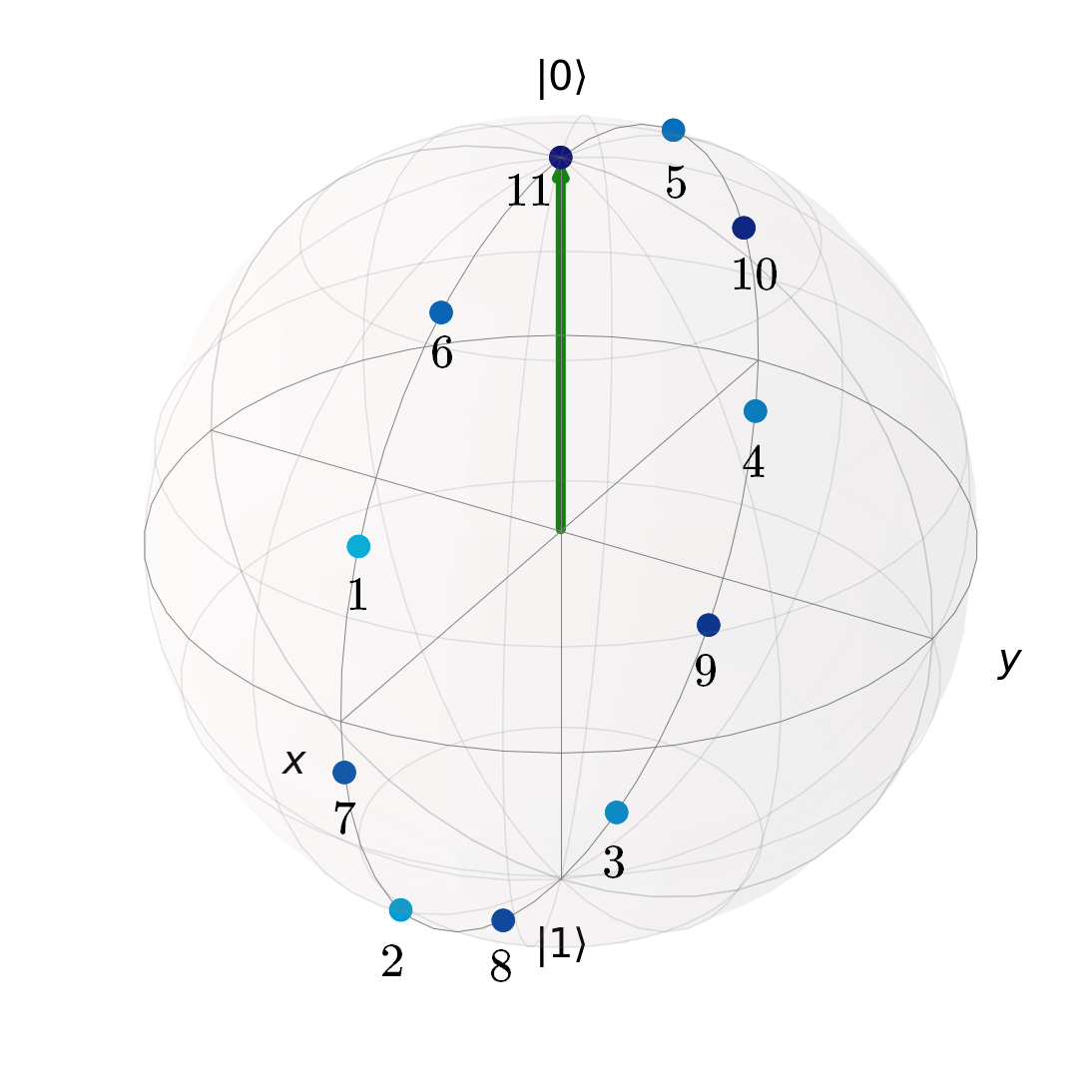}
  \caption{}
\end{subfigure}%
\begin{subfigure}{.3\textwidth}
  \centering
  \includegraphics[width=1\linewidth]{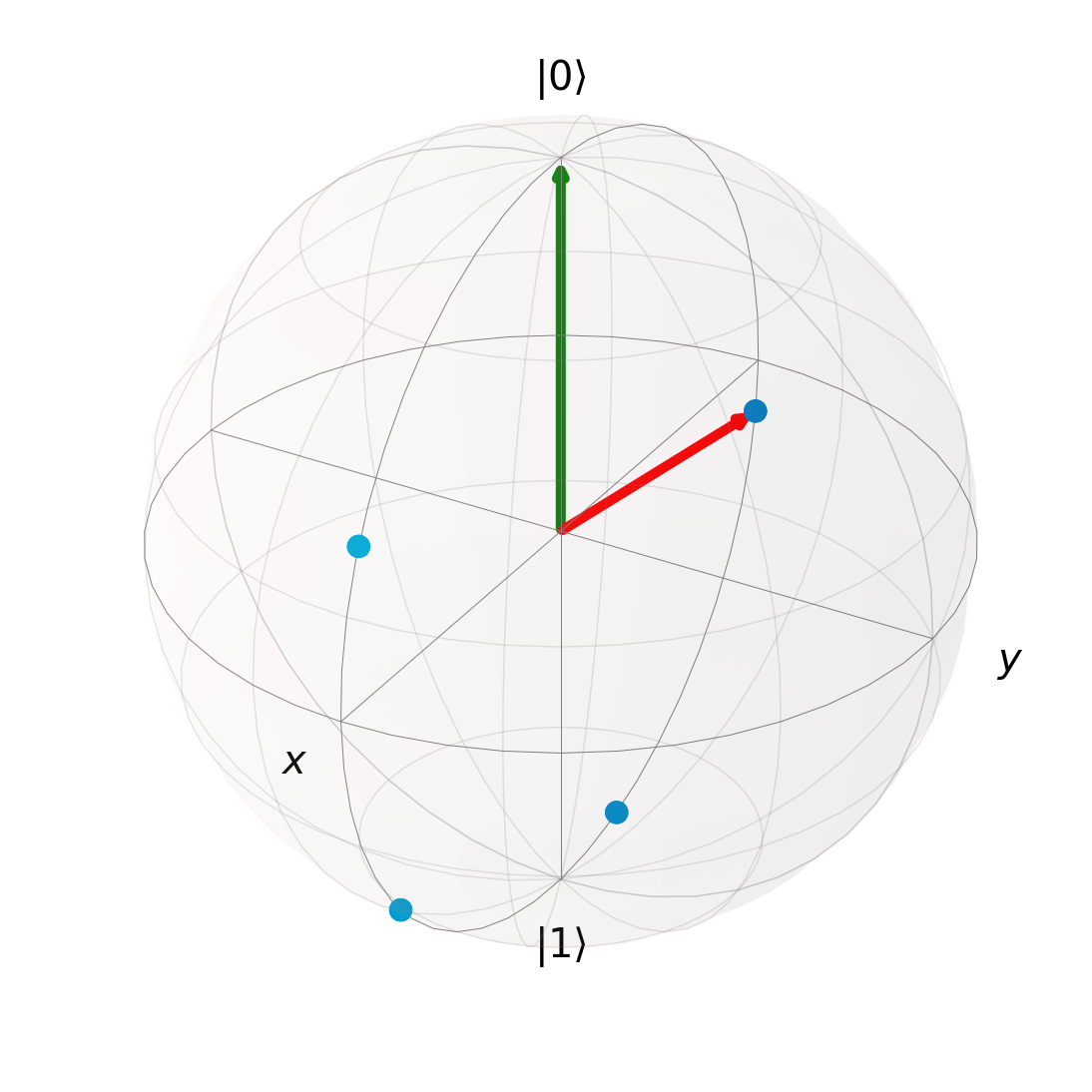}
   \caption{}
\end{subfigure}
\label{fig:test}
    \caption{Visualization of the original algorithm on the Bloch Sphere. Fig. 4(a) represents the computation for a string of length 11. The initial state is $\ket{0}$ and the quantum state of the qubit upon reading each symbol is depicted by the dots. In Fig. 4(b), the computation is visualized for a string of length 4. The final state vector is given in red.}
    \label{fig: rybloch}
\end{figure}

2-state MCQFA $\tilde{M}_{11}$ recognizing $\MOD{11}$ can be realized similarly using a single qubit. The operators $\tilde{U}_{\cent}$ and $\tilde{U}_{\$}$ are implemented using the $S_x$-gate $S_x^\dagger$-gate in Qiskit. The operator $\tilde{U}_a$ is implemented using a $R_z$-gate, providing two times the angle (see Lemma~\ref{lem: sxua}). In Figure \ref{fig: circuit1qubit}, the circuit diagram is given for the input string $aa$.

\begin{figure}[ht]
\vspace{-0.1in}
    \centering
    \begin{equation*}
        \scl{\(
        \Qcircuit @C=1.0em @R=1em{
            \lstick{q_0:\ket0}  & \gate{S_x} &\gate{R_z(\frac{4 \pi }{11})} & \gate{R_z(\frac{4 \pi }{11})} &  
            \gate{S_x^\dagger} & \meter & \rstick{c_0} \cw 
        }\)}
    \end{equation*}
    \caption{Alternative single qubit \MOD{11} implementation for the string aa. The circuit is initialized using a $S_x$-gate. For each scanned $a$, $R_z(\frac{4\pi}{11})$-gate is applied. $S_x^\dagger$-gate is applied at the end.}
    \label{fig: circuit1qubit}
\end{figure}

The computation of the new algorithm is visualized in Figure \ref{fig: rzbloch}. In (a), the initial state $\ket{0}$ is shown in green, and the state obtained after reading the right end-marker is drawn in orange. In (b), the states obtained after reading $a^j$ and applying $j$ $R_z$ gates are visualized. After reading the end-marker, the final state becomes $\ket{0}$, which is the accept state. We start in the state $\ket{0}$. After applying $ S_x $, the quantum state is mapped on the $x$-$y$ plane resulting in the following state:
\begin{equation}
    \ket{0} \xrightarrow{S_x} 
    \frac{1}{\sqrt{2}} \mypar{ \ket{0} + e^{-\frac{\pi}{2}} \ket{1}}.
\end{equation}
For each $ a $, the quantum state rotates on the $x$-$y$ plane with angle $ \frac{4\pi}{11} $ (i.e., updating only the local phase). The transition after reading the first $a$ is expressed as
\begin{equation}
    \frac{1}{\sqrt{2}} \mypar{ \ket{0} + e^{-\frac{\pi}{2}} \ket{1}}
     \xrightarrow{R_z\mypar{\frac{4\pi}{11}}  } 
    \frac{1}{\sqrt{2}} \mypar{ \ket{0} + e^{-\frac{\pi}{2}+\frac{4\pi}{11}} \ket{1}}.
\end{equation}
In general, for $ j \in \mathbf{Z}^+ $, the transition after reading the $j$\thh $a$ is given as follows:
\begin{equation}
    \frac{1}{\sqrt{2}} \mypar{ \ket{0} + e^{-\frac{\pi}{2}+(j-1)\frac{4\pi}{11}} \ket{1}}
     \xrightarrow{R_z\mypar{\frac{4\pi}{11}}  } 
    \frac{1}{\sqrt{2}} \mypar{ \ket{0} + e^{-\frac{\pi}{2}+j\frac{4\pi}{11}} \ket{1}}.
\end{equation}
After reading the 11\thh~$a$, the quantum state is expressed as in Eq.~\eqref{eq:11a}.
\begin{equation}\label{eq:11a}
    \frac{1}{\sqrt{2}} \mypar{ \ket{0} + e^{-\frac{\pi}{2}} \ket{1}}
\end{equation}
Then, after applying $ S_x^\dagger $, the quantum state returns back to $ \ket{0} $.
In (c), computation for a string of length 4 is visualized, and the final state is drawn in red. 

\begin{figure}[ht]
    \centering
\begin{subfigure}{.3\textwidth}
  \centering
  \includegraphics[width=1\linewidth]{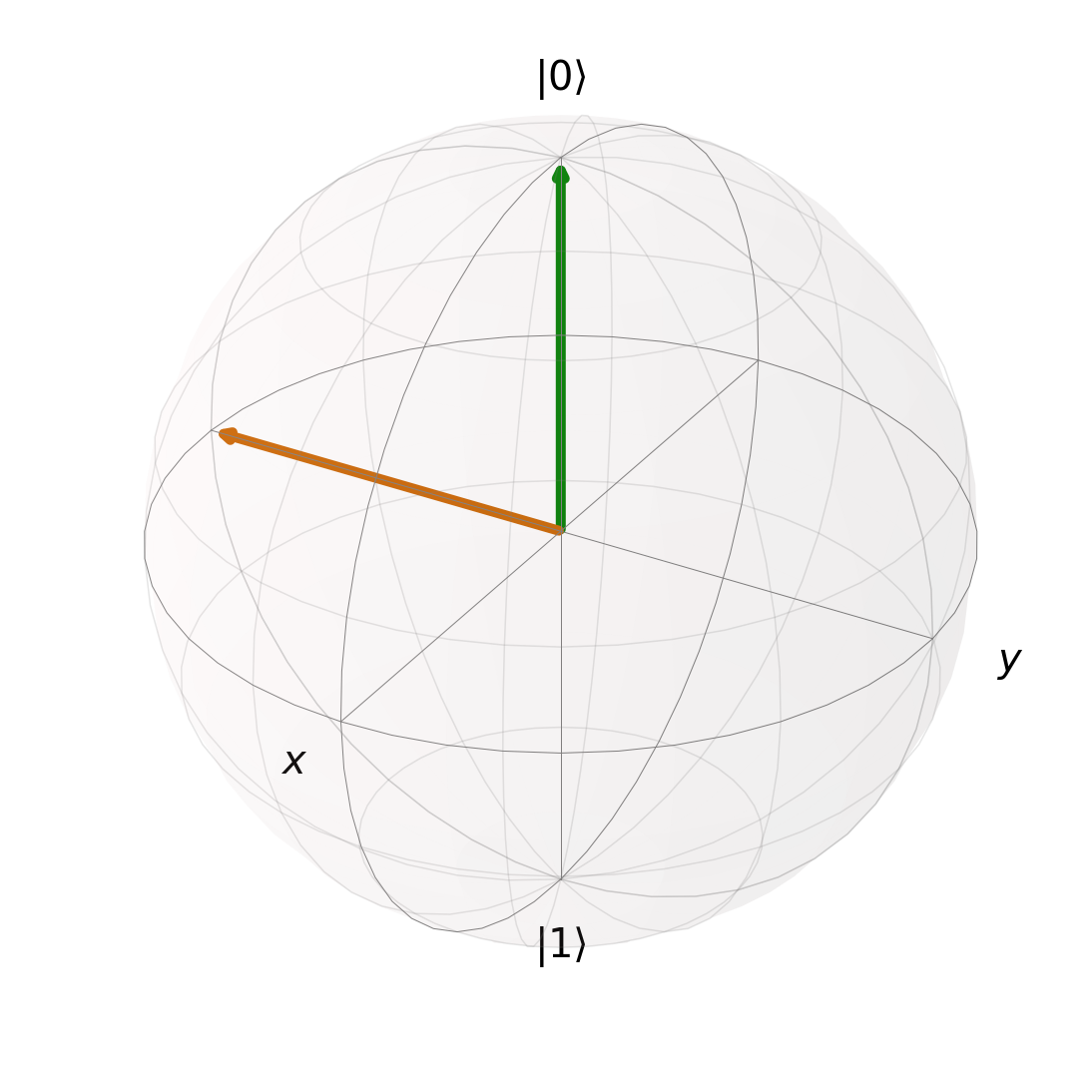}
  \caption{}
\end{subfigure}%
\begin{subfigure}{.3\textwidth}
  \centering
  \includegraphics[width=1\linewidth]{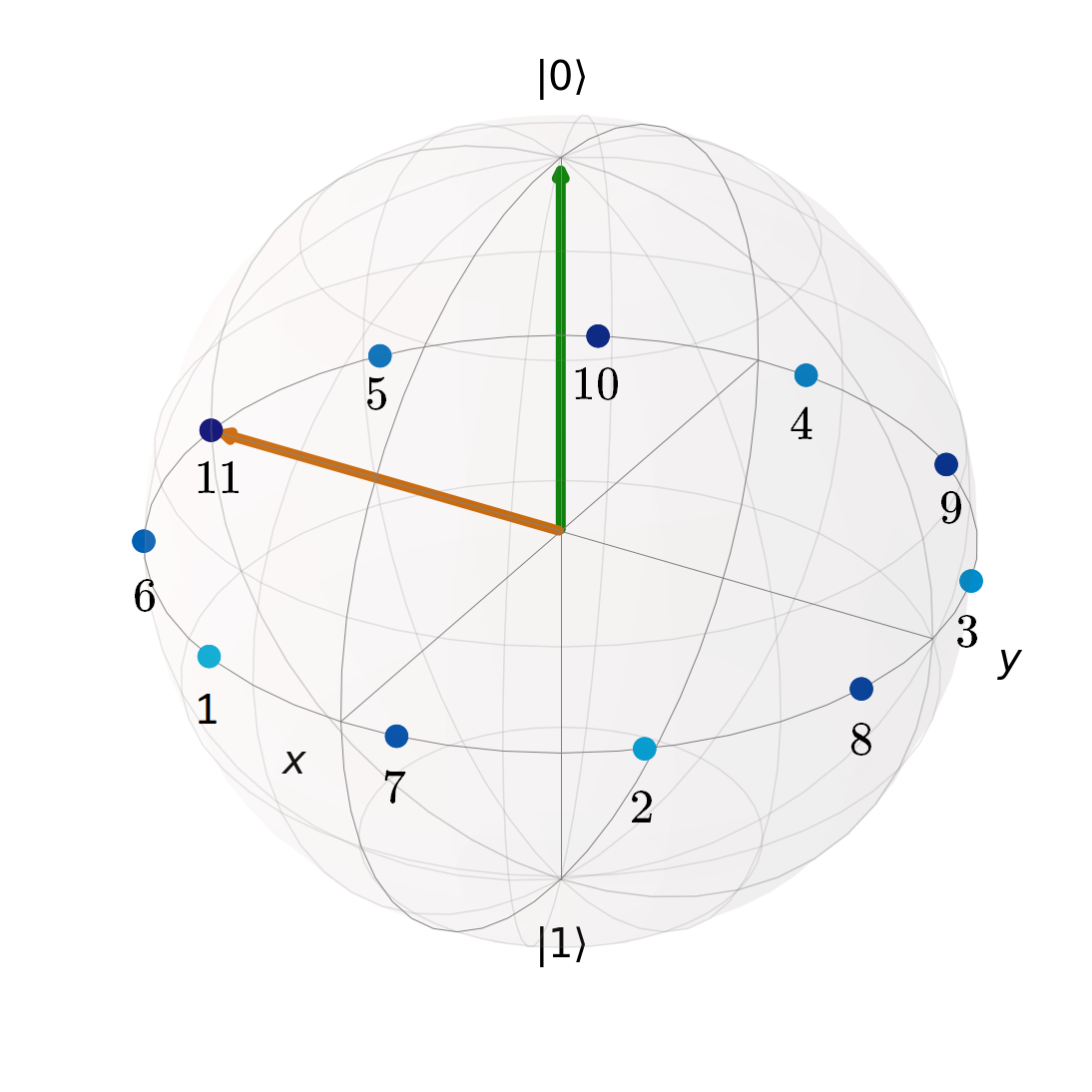}
  \caption{}
\end{subfigure}
\begin{subfigure}{.3\textwidth}
  \centering
  \includegraphics[width=1\linewidth]{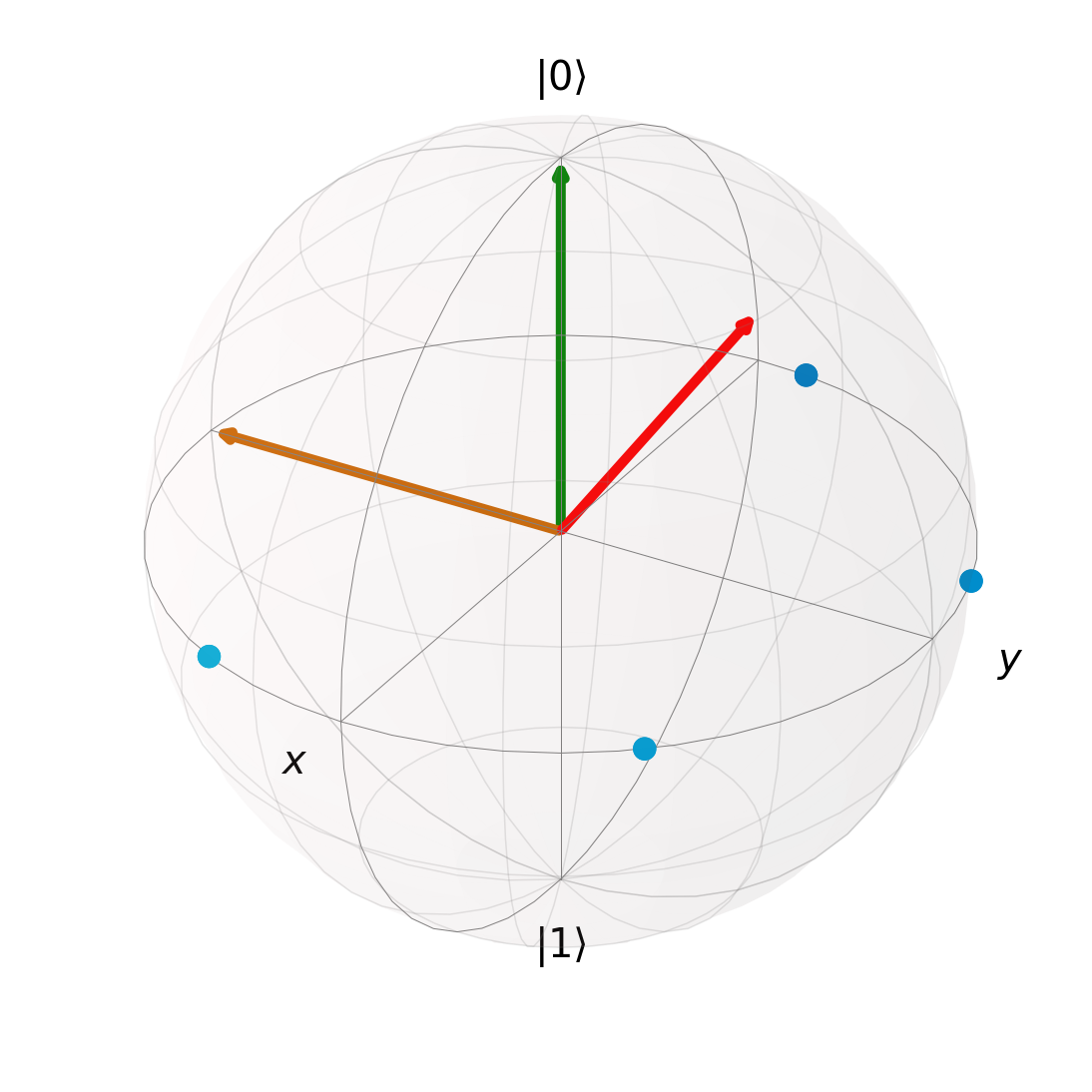}
  \caption{}
\end{subfigure}

    \caption{Visualization of the new algorithm on the Bloch Sphere. Fig. 6(a) is the initialization step, and the state obtained after applying the $S_x$ gate is given in orange. Fig. 6(b) represents the computation for a string of length 11. The quantum state of the qubit upon reading each symbol is depicted by the dots. In Fig. 6(c), the computation is visualized for a string of length 4. The final state vector is given in red.}
    \label{fig: rzbloch}
\end{figure}

Since $R_y$ gate is not among the set of basis gates, the circuit for the original algorithm is transpiled as depicted in Figure \ref{fig: transpile} before running on the real machine. Each $R_y$ gate is decomposed into two $R_z$ gates and two $S_x$ gates, overall requiring $2j$ $R_z$ gates and $2j$ $S_x$ gates for an input string $a^j$.

\begin{figure}[ht]

    \centering
    \begin{equation*}
        \scl{\(
        \Qcircuit @C=1.0em @R=1em{
         \lstick{q_0:\ket0}  &\gate{S_x}& \gate{R_z(\frac{15 \pi }{11})} &
             \gate{S_x}&
             \gate{R_z(\pi)} &
             \gate{S_x}& 
             \gate{R_z(\frac{15 \pi }{11})} &
             \gate{S_x}&
             \gate{R_z(\pi)} &
              \meter & \rstick{c_0} \cw 
        }\)}
    \end{equation*}
    \caption{Transpilation of single qubit \MOD{11} implementation for string $aa$.}
    \label{fig: transpile}
\end{figure}
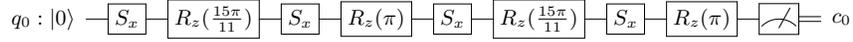

 Since the new algorithm we have proposed uses only basis gates and $S_x^\dagger$ which requires one $S_x$-gate and two $R_z$-gates when transpiled, the number of required gates is only two $S_x$ gates and $j+2$ $R_z$-gates for the input string $a^j$.

Let us note that the $R_z$ gates in Qiskit are implemented virtually, taking zero duration and with perfect accuracy \cite{MWSCG17}. Hence, running single qubit automata for any word length uses only two physical $S_x$ gates.

\subsection{Optimized Implementation}

Implementation of the operator $\tilde{U}$ is costly because it requires multiple controlled rotation gates as discussed in \cite{Kalis18,BSONY21}. To overcome this issue, implementation of the operator given in Eq.~\eqref{eq:hatua} is proposed in \cite{Kalis18} instead of implementing the unitary operator $\tilde{U}_a$.
 
  \begin{equation}\label{eq:hatua}
    \hat{U}_a = \mymatrix{cccc}{
        R_1 & 0 & 0 & 0 \\ 
        0 &  R_2 R_1 & 0 & 0 \\
        0 & 0 & R_3 R_1 & 0 \\
        0 & 0 & 0 & R_3 R_2 R_1
    } 
  \end{equation}

The proposed operator $\hat{U}_a $ does not rotate each sub-automaton with a different angle but instead each sub-automaton rotates with a combination of the chosen 3 angles. The circuit diagram implementing $\hat{U}_a $ is given in Figure \ref{circ:modp_optimized}.
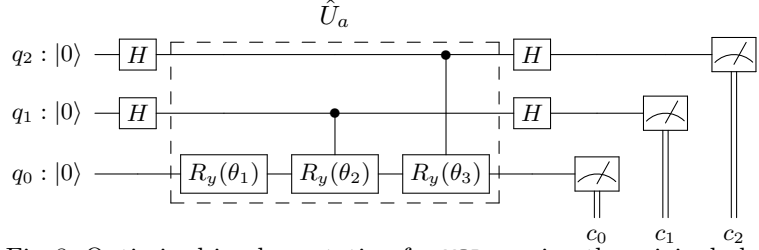
\begin{figure}[ht]
\vspace{-0.1in}
    \small
    \centering
\(
\Qcircuit @C=1em @R=1em {
    & & &\mbox{\normalsize \(\hat{U}_a\)} &  & & \\ 
 	\lstick{q_2:\ket0} & \gate{H} & \qw & \qw & \ctrl{2} &  \gate{H} & \qw & \qw & \meter\\
 	\lstick{q_1:\ket0} & \gate{H} & \qw & \ctrl{1} & \qw & \gate{H} & \qw & \meter\\
 	\lstick{q_0:\ket0} & \qw & \gate{R_y(\theta_1)} & \gate{R_y(\theta_2)} & \gate{R_y(\theta_3)} & \qw &  \meter \\
 	& & & & & & \dstick{c_0}\cwx[-1] & \dstick{c_1}\cwx[-2] & \dstick{c_2}\cwx[-3] \gategroup{2}{3}{4}{5}{.8em}{--}\\
}
\)
\caption{Optimized implementation for \MOD{11} using the original algorithm.}
    \label{circ:modp_optimized}
\end{figure}

To construct the optimized circuit for the new algorithm we have proposed, we can similarly implement the operator $\hat{U}_a'$ given in Eq.~\eqref{eq:hatuap} instead of $\tilde{U}_a'$.

\begin{equation} \label{eq:hatuap}
    \hat{U}_a' = \mymatrix{cccc}{
        R_1' & 0 & 0 & 0 \\ 
        0 &  R_2' R_1' & 0 & 0 \\
        0 & 0 & R_3' R_1' & 0 \\
        0 & 0 & 0 & R_3' R_2' R_1'
    }
\end{equation}

Let's trace the computation step by step. Upon reading the left end-marker, the machine is in the state expressed in Eq.~\eqref{eq:statecent}.
\begin{equation} \label{eq:statecent}
\ket{\psi_{\cent}} =\frac{1}{2} \sum_{j=0}^3 \ket{j}\otimes \left (\frac{1+i}{2}\ket{0} + \frac{1-i}{2}\ket{1} \right )
\end{equation}
Recalling that $R_l'$ is implemented using an $R_z$ gate say with angle $\phi_l$, after reading the first $a$ the superposition can be expressed as in Eq.~\eqref{eq:psi1}.

\begin{align} \label{eq:psi1}
    \ket{\psi_1} = 
        &\frac{1}{2} \ket{00} \otimes R_z(\phi_1)\left (\frac{1+i}{2}\ket{0} + \frac{1-i}{2}\ket{1} \right )+ \nonumber \\
        &\frac{1}{2} \ket{01} \otimes R_z(\phi_2)R_z(\phi_1)\left (\frac{1+i}{2}\ket{0} + \frac{1-i}{2}\ket{1} \right )+ \nonumber \\
        &\frac{1}{2} \ket{10} \otimes R_z(\phi_3)R_z(\phi_1)\left (\frac{1+i}{2}\ket{0} + \frac{1-i}{2}\ket{1} \right )+ \nonumber\\
        &\frac{1}{2} \ket{11} \otimes R_z(\phi_3)R_z(\phi_2)R_z(\phi_1)\left (\frac{1+i}{2}\ket{0} + \frac{1-i}{2}\ket{1} \right )
\end{align}
Letting $\theta_1 = \phi_1$, $\theta_2 = \phi_2 + \phi_1$ and $\theta_3 = \phi_1 + \phi_2 + \phi_3$, we
give the circuit diagram in Figure \ref{circ:modp_optimized_rz}.

\begin{figure}[ht]
\vspace{-0.1in}
    \small
    \centering
    \(
\Qcircuit @C=1em @R=1em {
    & & &\mbox{\normalsize \(\hat{U}_a'\)} &  & & \\ 
 	\lstick{q_2:\ket0} & \gate{H} & \qw & \qw & \ctrl{2} &  \gate{H} & \qw & \qw & \meter\\
 	\lstick{q_1:\ket0} & \gate{H} & \qw & \ctrl{1} & \qw & \gate{H} & \qw & \meter\\
 	\lstick{q_0:\ket0} & \gate{S_X} & \gate{R_z(\theta_1)} & \gate{R_z(\theta_2)} & \gate{R_z(\theta_3)} & \gate{S_X^{\dagger}} &  \meter \\
 	& & & & & & \dstick{c_0}\cwx[-1] & \dstick{c_1}\cwx[-2] & \dstick{c_2}\cwx[-3] \gategroup{2}{3}{4}{5}{.8em}{--}\\
}
\)
    \caption{Optimized implementation for \MOD{11} using the new algorithm.}
    \label{circ:modp_optimized_rz}
\end{figure}
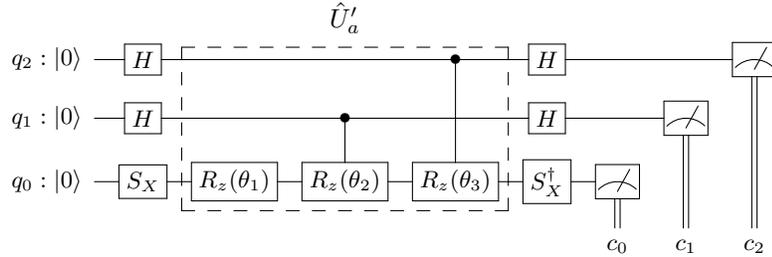

After the transpilation, the number of basis gates required by both implementations and the depth of the circuit for a string of length 11 is given in Table \ref{tabl:gates}. The depth of the circuit is the longest path from the input to the output of the circuit and is calculated using Qiskit's built-in function. Note that the transpilation of controlled rotation gates introduces $CX$ gates. The difference in the number of $S_x$ and $R_z$ gates is due to different transpilation schemes for controlled $R_y$ and $R_z$ gates. 

\begin{table}[h]
    \centering
    \begin{tabular}{lllll}
        \toprule
         & $S_X$ & $R_Z$ & $CX$ & Depth \\
         \midrule
        Original  & 70    & 74    & 44 & 183    \\
        New  & 6    & 44    & 44 &  84 \\
        \bottomrule
        \\
    \end{tabular}
    \caption{The number of basis gates required by the two implementations.}
    \label{tabl:gates}
\end{table}
We ran both implementations on $\mathit{ibmq\_belem}$ letting $K=\{3,5,7\}$. The experiments are repeated 3 times, and the number of shots is set to 8192. The average acceptance probabilities for each word length are plotted in Figure \ref{fig: plot}. The black lines above the bars show the minimum and maximum acceptance probabilities. Ideal probability shows the acceptance probability that is calculated analytically.

\begin{figure}
    \centering
    \includegraphics[width= 0.9 \textwidth]{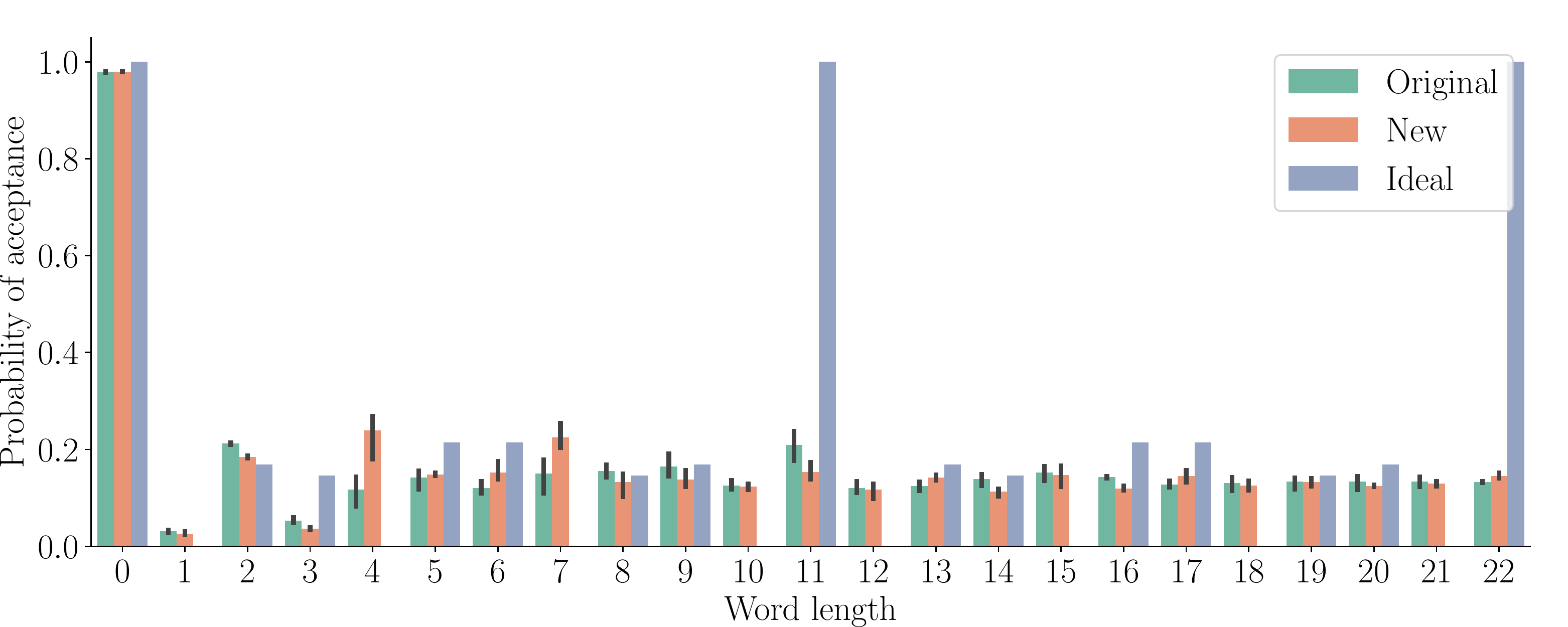}
    \caption{Acceptance probabilities for various word lengths.}
    \label{fig: plot}
\end{figure}

Ideally, the accepting probabilities for lengths 0, 11, and 22 are 1, and the accepting probabilities for the others are bounded above by 0.22. As seen from the figure, the experimental results are far from the ideal for lengths 11 and 22, and for the rest, both the experimental results and ideal results are bounded above. When considering both cases, we can say that the experimental results are quickly getting random as their statistics are similar to each other, even for lengths 11 and 22. 

The main reason behind the above behavior is that the available real quantum computers are heavily prone to noise \cite{NISQ,IBM,CKJMC19}. There are various sources of noise such as gate errors, measurement readout errors, and thermal relaxation errors \cite{WO20}. Various experiments are run during the calibration of the devices, to characterize the noise and determine the error parameters for each backend. The estimated error parameters for each backend are available through Qiskit as the \textit{basic device noise model}, although with some simplifications \cite{WO20}.  The basic device noise model allows performing noisy simulations with the noise parameters of the real machines. Thus, we can go one step ahead and obtain further evidence for our experimental results. 

We execute both implementations on the simulators mimicking the noise of real machines, and then we check how close the ideal quantum states and noisy-simulated quantum states are. We use the fidelity for measuring the closeness of the states, which is defined as in Eq.~\eqref{eq:fid},
\begin{equation}\label{eq:fid}
 F(\sigma,\rho) =   \left (tr \sqrt{\sqrt{\rho} \sigma \sqrt{\rho} }\right )^2
\end{equation}
where $\rho$ and $\sigma$ are two density matrices. In our case, the ideal state is a pure state while the state obtained as a result of the noisy simulation is a mixed state; the formula reduces to $\bra{\psi} \sigma \ket{\psi}$ where $\rho = \ket{\psi}\bra{\psi}$. We calculate the fidelity of the states after reading the inputs $ a, aa, aaa, \ldots, a^{22} $. The results are presented in Table~\ref{tabl: gates} and visualized in Fig.~\ref{fig: fidelity}.

\begin{table}[h]
 \setlength{\tabcolsep}{5pt}
\centering
\begin{tabular}{|c|c|c|c|c|c|}
\hline
\textbf{Length} & \textbf{Original} &  \textbf{New}  & \textbf{Length} & \textbf{Original} & \textbf{New}  \\ \hline
\textbf{1}  & 0.862 & 0.880 & \textbf{12} & 0.405 & 0.450 \\ \hline
\textbf{2}  & 0.727 & 0.830 & \textbf{13} & 0.412 & 0.410 \\ \hline
\textbf{3}  & 0.766 & 0.750 & \textbf{14} & 0.380 & 0.380 \\ \hline
\textbf{4}  & 0.717 & 0.720 & \textbf{15} & 0.344 & 0.370 \\ \hline
\textbf{5}  & 0.612 & 0.680 & \textbf{16} & 0.342 & 0.360 \\ \hline
\textbf{6}  & 0.621 & 0.630 & \textbf{17} & 0.309 & 0.320 \\ \hline
\textbf{7}  & 0.477 & 0.590 & \textbf{18} & 0.315 & 0.310 \\ \hline
\textbf{8}  & 0.506 & 0.560 & \textbf{19} & 0.301 & 0.310 \\ \hline
\textbf{9}  & 0.514 & 0.510 & \textbf{20} & 0.287 & 0.290 \\ \hline
\textbf{10} & 0.481 & 0.500 & \textbf{21} & 0.258 & 0.280 \\ \hline
\textbf{11} & 0.459 & 0.470 & \textbf{22} & 0.266 & 0.270 \\ \hline
\end{tabular}%
    \caption{The fidelity of ideal quantum states and noisy-simulated states calculated for both experiments.}
    \label{tabl: gates}
\end{table}

\begin{figure}[h!]
\begin{center}
\includegraphics[scale=0.7]{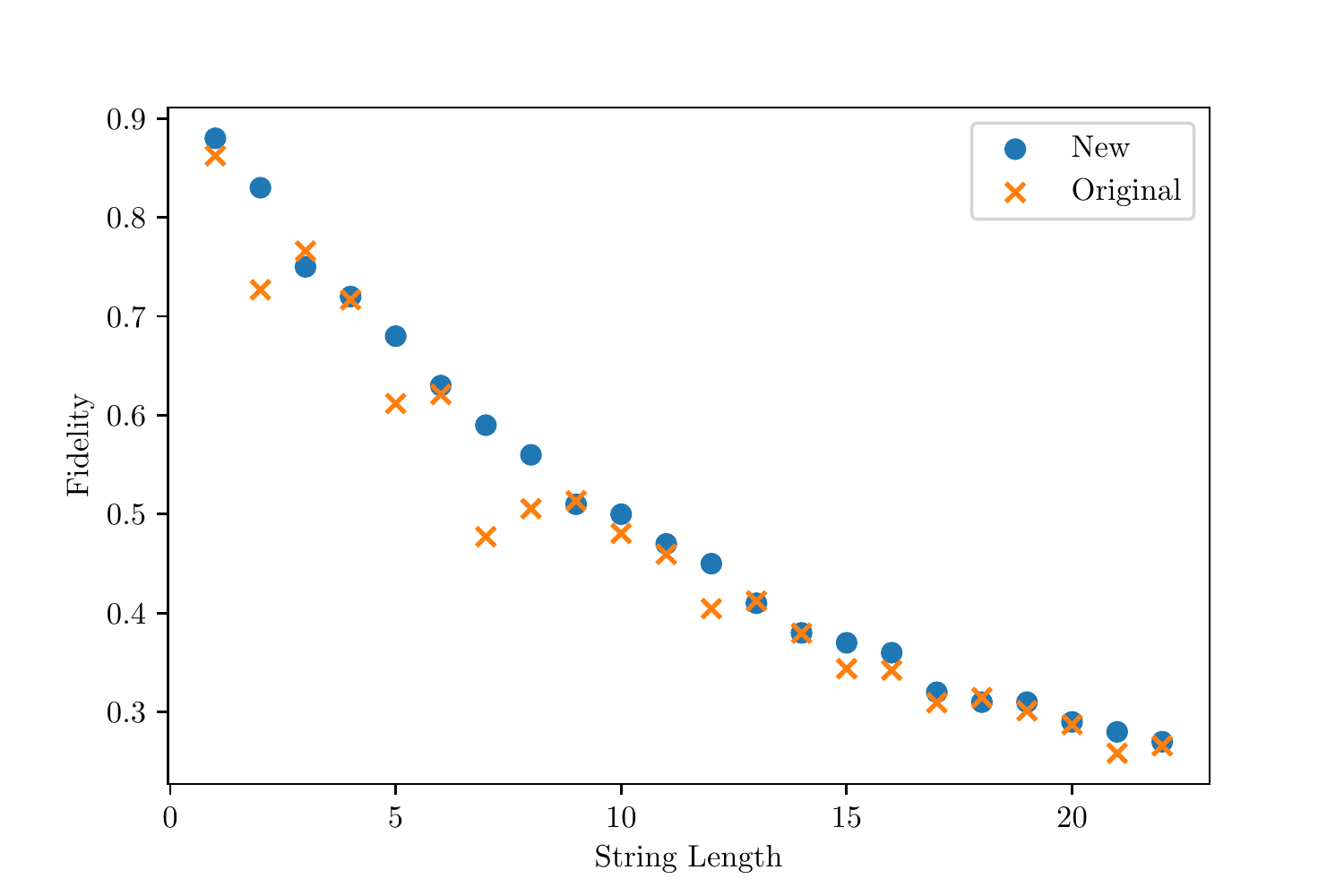}
\caption{The fidelity of ideal quantum states and noisy-simulated states calculated for both experiments is depicted in the graph.}
\label{fig: fidelity}
\end{center}
\end{figure}

If there were no noise or negligible noise, the fidelity would be 1 or close to 1, respectively. As seen from the data, even after reading a single symbol, the fidelity is away from 1, and, for the longer inputs, the fidelity quickly drops. We observe the accumulation of noise, as the fidelity continues to decrease with the length of the input.

Although the new implementation requires fewer gates and the circuit has a smaller depth, unfortunately, the experiment results do not look promising due to noise.

\section{Conclusion and Future Work}

In this paper, we presented a modified MCQFA algorithm for the $\MODp$ problem, whose circuit implementation has smaller depth and uses fewer gates compared to the original algorithm when run on the IBMQ backends. Despite the unsatisfying results from the real backend, the study contributes both to the field of QFA and the implementation of quantum algorithms on real devices. Further studies should be conducted to mitigate the effect of noise in the MCQFA setting, which is not a trivial task since operators should be applied one at a time by definition of MCQFA. 

\section*{Acknowledgements}
This work has been supported by the Polish National Science Center under the grant agreement 2019/33/B/ST6/02011. Yakary{\i}lmaz was partially supported by the ERDF project Nr. 1.1.1.5/19/A/005 ``Quantum computers with constant memory'' and the project ``Quantum algorithms: from complexity theory to experiment'' funded under ERDF programme 1.1.1.5.

We thank Utku Birkan, Cem Nurlu, and Viktor Olejar for certain discussions on the implementations of QFA algorithms by using operators $SX$ and $ R_z(\theta) $. We thank Robert L. Svarinskis for his corrections.

We acknowledge the use of IBM Quantum services for this work. We acknowledge the access to advanced services provided by the IBM Quantum Researchers Program.

The code used for the experiments is available at \cite{code}.

\bibliographystyle{splncs04}

\bibliography{references}

\end{document}